\newcommand\org@hypertarget{}
\let\org@hypertarget\hypertarget
\renewcommand\hypertarget[2]{%
  \Hy@raisedlink{\org@hypertarget{#1}{}}#2%
  }
\definecolor{darkgreen}{RGB}{50,190,50}
\definecolor{darkblue}{RGB}{0,0,190}
\definecolor{darkred}{RGB}{238,0,0}
\definecolor{darkpurple}{rgb}{0.4, 0.2, 0.5}
\newcommand{\ket}[1]{\ensuremath{\left|\right.\!{#1}\!\left.\right\rangle}}
\newcommand{\ketbra}[2]{\ensuremath{|{#1}\rangle\!\langle{#2}|}}
\newcommand{\id}{\mathds{1}}
\DeclareMathOperator{\Tr}{Tr}
\newtheorem{coro}{Corollary}
\newtheorem{prop}{Proposition}
\begin{document}
\title{Exploring Imaginary Coordinates: Disparity in the Shape of Quantum State Space in Even and Odd Dimensions}
\author{Simon Morelli}
\email{simon.morelli@tuwien.ac.at}
\affiliation{Technische Universität Wien, Atominstitut, Stadionallee 2, 1020 Vienna, Austria}
\affiliation{BCAM - Basque Center for Applied Mathematics,
Mazarredo 14, E48009 Bilbao, Basque Country - Spain}
\orcid{0000-0002-7588-2701}
\author{Santiago Llorens}
\email{santiago.llorens@ug.edu.pl}
\affiliation{Division of Quantum Information, Faculty of Mathematics, Physics and Informatics,
University of Gda{\'n}sk, Wita Stwosza 57, 80-308 Gda{\'n}sk, Poland}
\affiliation{F{\'i}sica Te{\`o}rica: Informaci{\'o} i Fen{\`o}mens Qu{\`a}ntics, Departament de F{\'i}sica, Universitat Aut{\`o}noma de Barcelona, 08193 Bellaterra, Spain}
\orcid{0000-0002-8643-447X}

\author{Jens Siewert}
\email{jens.siewert@ehu.eus}
\affiliation{Department of Physical Chemistry and EHU Quantum Center, University of the Basque Country UPV/EHU,
E-48080 Bilbao, Spain}
\affiliation{IKERBASQUE Basque Foundation for Science, E-48009 Bilbao, Spain}
\orcid{0000-0002-9410-5043}


\begin{abstract}
The state of a finite-dimensional quantum system is described by a density matrix that can be decomposed into a real diagonal, a real off-diagonal and and an imaginary off-diagonal part. The latter plays a peculiar role. While it is intuitively clear that some of the imaginary coordinates cannot have the same extension as their real counterparts the precise relation is not obvious. We give a complete characterization of the constraints in terms of tight inequalities for real and imaginary Bloch-type coordinates. Our description entails a three-dimensional Bloch ball-type model for the state space. We uncover a surprising qualitative difference for the state-space boundaries in even and odd dimensions.
\end{abstract}

\maketitle

\section{Introduction}

From the very beginning, quantum mechanics was naturally represented in terms of complex numbers. The state of a quantum system was identified with an element of a complex Hilbert space~\cite{vNeumann_1927,Dirac_1930}.
Whether this complex Hilbert-space structure is actually necessary or merely more elegant than a hypothetical real version was an open question until very recently. 
Assuming the dimension of the system to be bounded, it is not difficult to design experiments that show that states in a complex Hilbert space are required to describe the observed data. If the dimension is bounded by two, a sequential Stern-Gerlach experiment will suffice to come to this conclusion~\cite{Sakurai_Napolitano_2020}.
However, if no constraints on the dimension are assumed, it is always possible to explain the measurement result of a single system by a real system of higher dimension. This is even possible for two parties that are restricted to local operations~\cite{Pal_2008,McKague_2009}.
Only recently it was shown that real quantum theory, that is, quantum theory with states belonging to real Hilbert spaces with the standard tensor structure, can be experimentally refuted by designing an experiment in a quantum network involving three parties~\cite{Renou_2021}.

Thus, while it appears that a complex Hilbert space structure is intrinsically needed to describe quantum mechanics, the role of 'imaginarity' has yet to be fully understood, cf., e.g., Refs.~\cite{Caves2001,Wootters2012,Chen_2023,Xu_2023,Fernandes2024,Bruss2026,Woods2026}.
One might ask 'how real' or 'how imaginary' a quantum state can be. The resource theory of imaginarity~\cite{Hickey_2018, Wu_2021,Wu_2021_2} measures imaginarity as a quantity that cannot be increased under the set of free operations, real quantum channels.
We approach this fundamental question from a different perspective. In analogy with the qubit Bloch ball description based on the three Pauli matrices~\cite{Bengtsson_2006} we define a diagonal, a real off-diagonal and a purely imaginary coordinate and ask what weight can be put on these coordinates to be compatible with a positive operator. We then introduce two novel inequalities in Proposition~\ref{prop:i<1+r} and~\ref{prop:linear_bound}, which completely characterize the set of attainable values for these coordinates and obtain a Bloch ball-type model~\cite{Bengtsson_2012} of the quantum state space.

The Bloch representation of a qubit suggests the quantum space state to be rotation invariant with respect to these three  coordinates. As it turns out, this only holds for the real diagonal and real off-diagonal coordinate in arbitrary dimensions, while the imaginary coordinate has to be treated separately.
In this article we give a precise quantitative bound on the imaginary Bloch length of a quantum state compared to both its real diagonal and real off-diagonal Bloch components. Surprisingly, we thereby find a qualitative difference between systems of even dimensions and systems of odd dimensions. This difference is most pronounced for small system size and vanishes in the asymptotic limit as one might expect. It is noteworthy that the generic form for systems of odd dimension appears only from dimension 5 onward.

\section{Notation}

Every quantum state can be written in the following way,
\begin{align}
    \rho\ =\ \frac{1}{d}(\id_d+D+X+I)\ =\ \frac{1}{d}(R+I)\ \ ,
\label{eq:DXImat}
\end{align}
where $R$ is a real, positive and symmetric matrix, $D$ is a diagonal matrix, $X$ a real off-diagonal matrix and $I$ a purely imaginary (off-diagonal) and skew-symmetric matrix. The operators $D$, $X$ and $I$ are all Hermitian, traceless, and orthogonal to each other, that is $\Tr(D X)=\Tr(D I)=\Tr(X I)=0$.
For dimension $d=2$ we recover the Bloch representation in terms of the Pauli-matrices, where $D=\Tr\left(\rho\sigma_Z\right)\sigma_Z\equiv\langle\sigma_Z\rangle\sigma_Z$, $X=\langle\sigma_X\rangle\sigma_X$ and $I=\langle\sigma_Y\rangle\sigma_Y$.
Let us introduce the quantities
\begin{align*}
    S_{\text{D}}=\sqrt{\frac{\Tr D^2}{d}},\quad S_{\text{X}}=\sqrt{\frac{\Tr X^2}{d}},\quad S_{\text{I}}=\sqrt{\frac{\Tr I^2}{d}}
\end{align*}
as magnitudes of the diagonal, real off-diagonal and the imaginary part of the state $\rho$. For a qubit we recover $S_{\text{D}}=\langle\sigma_Z\rangle$, $S_{\text{X}}=\langle\sigma_X\rangle$ and $S_{\text{I}}=\langle\sigma_Y\rangle$.
It holds that
 \begin{align}\label{eq:purity}
     \Tr\rho^2&\ =\ \frac{1}{d^2}(d+\Tr D^2+\Tr X^2+\Tr I^2)\notag\\
     &\ =\ \frac{1}{d}\left(1+S_{\text{D}}^2+S_{\text{X}}^2+S_{\text{I}}^2\right)\ \ .
 \end{align}
In dimensions $d>2$ the quantities $S_{\text{D}}$, $S_{\text{X}}$ and $S_{\text{I}}$ can be viewed as a coarse-graining of the Bloch vector of $\rho$.
Every quantum state $\rho$ can be expanded in an orthogonal matrix basis as
\begin{align}
     \rho\ =\ \frac{1}{d}\left(\id_d+\sum\limits_{k=1}^{d^2-1} v_k \mu_k\right)\ ,
\label{eq:blochd}
\end{align}
where we call a basis $\{\mu_k\}_{i=0}^{d^2-1}$ satisfying $\mu_0=\id_d$ and 
$\Tr(\mu_k\mu_l^{\dagger})=d\delta_{kl}$ a Bloch basis~\cite{BertlmannKrammer08,Siewert_2022}.
The vector $[\mathbf{v}]_k=v_k=\langle\mu_k\rangle$ is called the Bloch vector of the state $\rho$, where $v_0=1$ is not included.
Choosing a matrix basis that includes only diagonal elements $\mu^\text{d}_k$, strictly real off-diagonal elements $\mu^\text{x}_k$ and strictly imaginary elements $\mu^\text{i}_k$, e.g., the generalized Gell-Mann matrices, the state can be expanded as
\begin{align}
     \rho = \frac{1}{d}\scalebox{0.95}{$\left(\id_d+\sum\limits_{k=1}^{d-1} v^\text{d}_k \mu^\text{d}_k+\sum\limits_{k=1}^{\scriptstyle{\tfrac{d}{2}(d-1)}} v^\text{x}_k \mu^\text{x}_k+\sum\limits_{k=1}^{\scriptstyle{\tfrac{d}{2}(d-1)}} v^\text{i}_k \mu^\text{i}_k\right)$},
\label{eq:bloch_real_imaginary}
\end{align}
where
\begin{align*}
    \scalebox{0.95}{$
    D=\sum\limits_{k=1}^{d-1} v^\text{d}_k \mu^\text{d}_k\ ,\quad
    X=\sum\limits_{k=1}^{\scriptstyle{\tfrac{d}{2}(d-1)}} v^\text{x}_k \mu^\text{x}_k\ ,\quad
    I=\sum\limits_{k=1}^{\scriptstyle{\tfrac{d}{2}(d-1)}} v^\text{i}_k \mu^\text{i}_k\ .
    $}
\end{align*}
It follows immediately that 
\begin{align*}
    \scalebox{0.95}{$
    S_{\text{D}}^2=\sum\limits_{k=1}^{d-1} |v^\text{d}_k|^2\ ,\quad
    S_{\text{X}}^2=\sum\limits_{k=1}^{\scriptstyle{\tfrac{d}{2}(d-1)}} |v^\text{x}_k|^2\ ,\quad
    S_{\text{I}}^2=\sum\limits_{k=1}^{\scriptstyle{\tfrac{d}{2}(d-1)}} |v^\text{i}_k|^2\ .
    $}
\end{align*}

\section{Bounds}

We now investigate which values of $S_{\text{D}}$, $S_{\text{X}}$ and $S_{\text{I}}$ are compatible with a quantum state.
The purity $\Tr\rho^2$ of a quantum state is bounded by one and therefore from Eq.~\eqref{eq:purity} it follows that,
\begin{align}\label{eq:purity_bound} 
    S_{\text{D}}^2+S_{\text{X}}^2+S_{\text{I}}^2\ \le\ d-1\ \ .
\end{align}
For qubits this is the only restriction and the quantum set, that is, the set $\{(S_{\text{D}}(\rho), S_{\text{X}}(\rho),S_{\text{I}}(\rho):\rho\ge0,\ \Tr\rho=1\}$, is rotation invariant. We will see that the same applies in higher dimension only for the values $S_{\text{D}}$ and $S_{\text{X}}$,  while the coordinate $S_{\text{I}}$ plays a special role.
We therefore start our investigation by bunching the two real coordinates together to $S_{\text{R}}=\sqrt{S_{\text{D}}^2+S_{\text{X}}^2}$. It turns out that the constraints we get for the $S_{\text{R}}$ and $S_{\text{I}}$ values are the only ones that exist, and that the real coordinates
are subject to rotational symmetry.

Every quantum state $\rho$ can be split into its real and imaginary parts as $\rho=(R+I)/d$,
with $R=d(\rho+\rho^T)/2$ and $I=d(\rho-\rho^T)/2$~\cite{Hickey_2018}.
The operator $R=\id+D+X$ with $\Tr R=d$ is orthogonal to the imaginary part, $I$.
Note that, while $R$ includes the identity $\id$, the length $S_{\text{R}}$ does not contain its contribution. That is,  
\mbox{$S_{\text{R}}^2=\Tr R^2/d-1$}.

\begin{figure*}
\begin{center}
     \includegraphics[width=0.7\columnwidth]{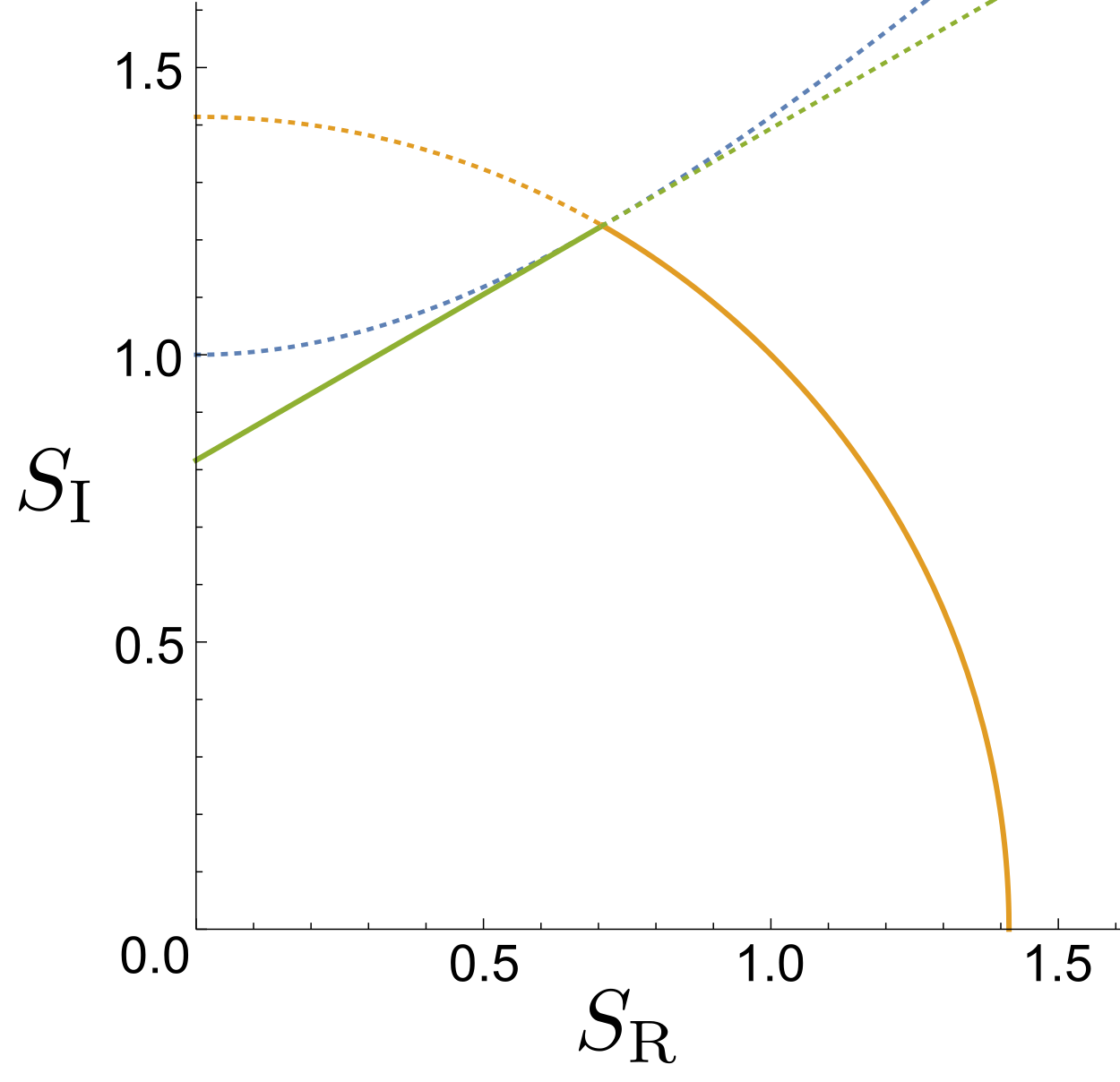}
     \hspace{-0.05\columnwidth}
     \includegraphics[width=0.7\columnwidth]{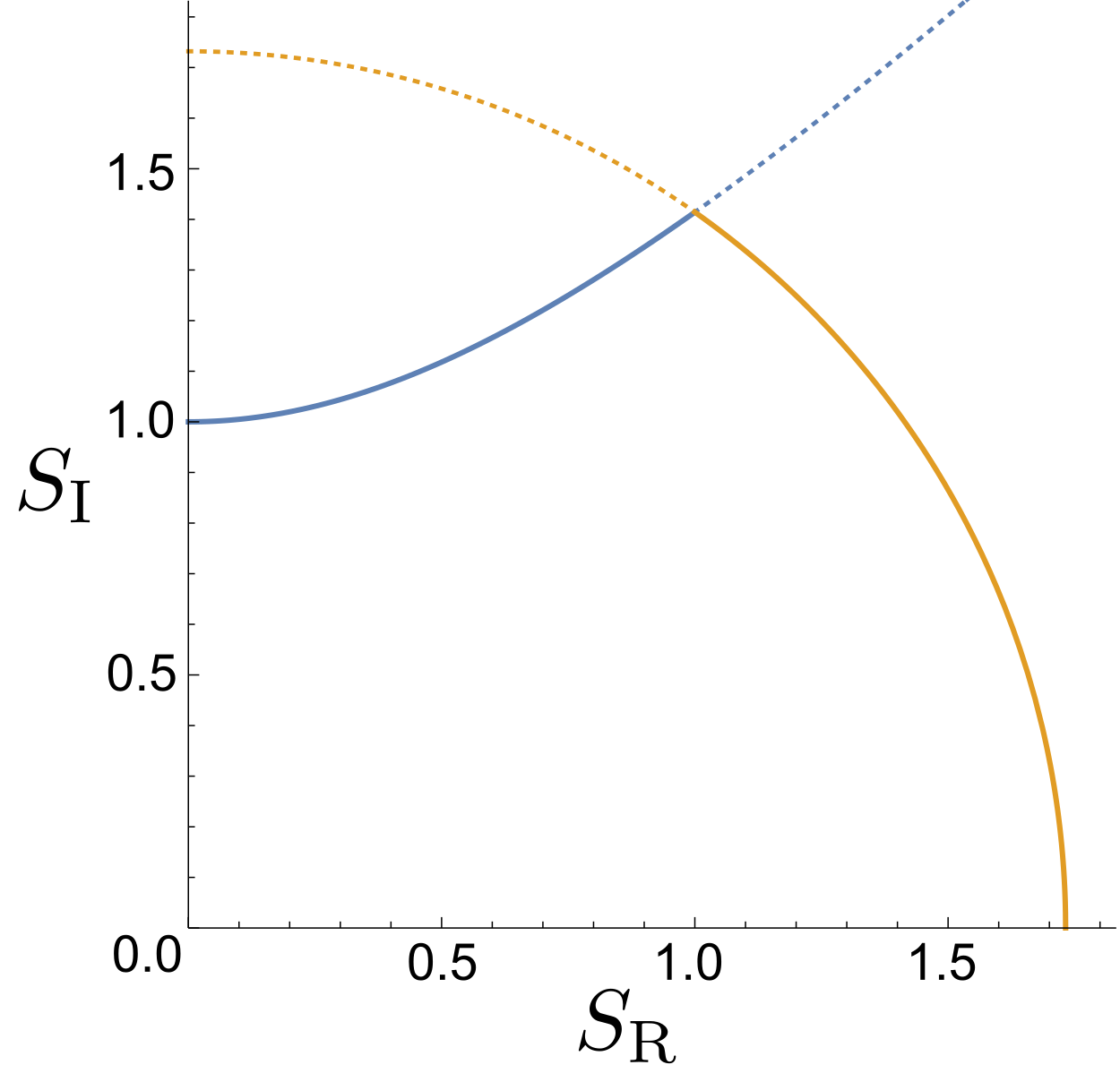}
     \hspace{-0.05\columnwidth}
     \includegraphics[width=0.7\columnwidth]{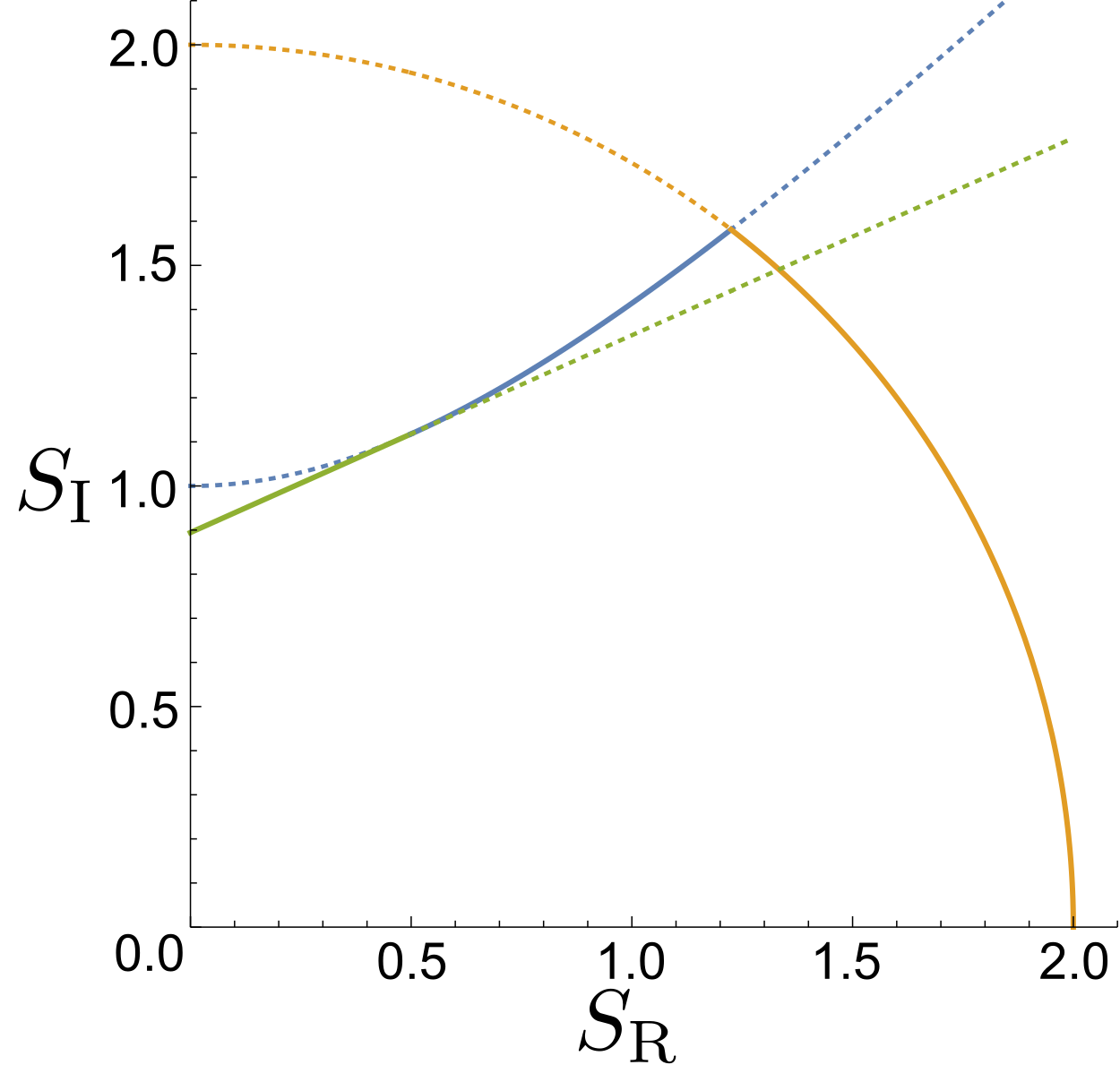}\\
     (a) \hspace{0.6\columnwidth} (b) \hspace{0.6\columnwidth} (c)\\
        \caption{The three figures display the purity bound of Eq.~\eqref{eq:purity_bound} in orange, the quadratic bound of Eq.~\eqref{eq:quadratic_bound} in blue and the linear bound of Eq.~\eqref{eq:linear_bound} in green, (a) $d=3$, (b) $d=4$, (c) $d=5$. Continuous lines show the actual bounds, dotted lines are their continuation, where they are either not valid or not tight.
        }
    \label{fig:areas}
\end{center}
\end{figure*}

\begin{prop}
For every finite-dimensional quantum state it holds that the quantities $S_{\mathrm{R}}$ and $S_{\mathrm{I}}$ satisfy
\begin{align}\label{eq:quadratic_bound}
    S_{\mathrm{I}}^2\ \le\ 1 + S_{\mathrm{R}}^2\ \ .
\end{align}
\label{prop:i<1+r}
\end{prop}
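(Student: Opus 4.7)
The plan is to recast the inequality as the trace positivity $\Tr(\rho\rho^T)\ge 0$ and then invoke the standard fact that $\Tr(PQ)\ge 0$ for any two positive semidefinite operators $P,Q$. Using the conventions $S_{\text{R}}^2=\Tr R^2/d-1$ and $S_{\text{I}}^2=\Tr I^2/d$ recorded in the preceding paragraph, the claim $S_{\text{I}}^2\le 1+S_{\text{R}}^2$ is equivalent to the clean Hilbert--Schmidt inequality $\Tr I^2\le\Tr R^2$, so it suffices to prove the latter.

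The key identification is as follows. Because $\rho$ is Hermitian, $\rho^T=\overline{\rho}$, and the definitions $R=d(\rho+\rho^T)/2$, $I=d(\rho-\rho^T)/2$ give $R+I=d\rho$ and $R-I=d\rho^T$. Multiplying these out,
\begin{align*}
d^2\,\Tr(\rho\rho^T)=\Tr\bigl((R+I)(R-I)\bigr)=\Tr R^2-\Tr I^2,
\end{align*}
where the cross terms $-\Tr(RI)+\Tr(IR)$ cancel by cyclicity of the trace. Hence $\Tr R^2\ge\Tr I^2$ is equivalent to $\Tr(\rho\rho^T)\ge 0$.

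The final step is the positivity itself. The matrix $\rho^T=\overline{\rho}$ is again positive semidefinite: conjugating $\rho v=\lambda v$ yields $\overline{\rho}\,\overline{v}=\lambda\overline{v}$ because $\lambda\in\mathbb{R}$, so $\overline{\rho}$ has the same nonnegative spectrum as $\rho$. Consequently $\Tr(\rho\rho^T)\ge 0$ follows from the elementary identity $\Tr(PQ)=\|Q^{1/2}P^{1/2}\|_F^2\ge 0$ applied to $P=\rho$, $Q=\rho^T$.

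The only non-routine step in this outline is the algebraic identification in the middle paragraph; the hardest part is therefore simply noticing that $R^2-I^2$ is, up to a factor $d^2$, the trace of the product of two positive operators, after which the desired bound is immediate and, pleasingly, requires nothing beyond $\rho\ge 0$ itself.
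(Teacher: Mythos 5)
Your proposal is correct and follows essentially the same route as the paper: both reduce the claim to $\Tr I^2\le\Tr R^2$, identify $R\pm I$ with $d\rho$ and $d\rho^T$, and conclude from $\Tr(\rho\rho^T)\ge 0$ for the two positive operators $\rho$ and $\rho^T$. Your write-up merely spells out a few steps (the cancellation of the cross terms and the positivity of $\rho^T$) that the paper leaves implicit.
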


\begin{proof}
It follows from the positivity of $\rho$ and $\rho^T$ that
\begin{align}
    0&\le d \Tr\left(\rho\rho^T\right)=\frac{1}{d}\Tr[(R+I)(R-I)]\\
    &=\frac{1}{d}(\Tr R^2-\Tr I^2)=1+S_{\text{R}}^2-S_{\text{I}}^2\ \ .\notag
\end{align}
\end{proof}

The bound is saturated by states of the form
\begin{align}\label{states_even}
    \rho\ =  
    \scalebox{0.75}{$
    \begin{pmatrix} 
    \alpha_2 & -i\alpha_2 &&&&&\\
    i\alpha_2 i& \alpha_2 &&&&&\\
    && \alpha_4 & -i\alpha_4  &&&\\
    && i\alpha_4  & \alpha_4 &&&\\
    &&&& \ddots &&&\\
    &&&&& \alpha_{d} & -i\alpha_{d}  \\
    &&&&& i\alpha_{d}  & \alpha_{d}
    \end{pmatrix}
    $}
\end{align}
for even dimension and states of the form
\begin{align}\label{states_odd_1}
    \rho\ =
    \scalebox{0.75}{$
    \begin{pmatrix} 
    \alpha_2 & -i\alpha_2 &&&&&&\\
    i\alpha_2 & \alpha_2 &&&&&&\\
    && \alpha_4 & -i\alpha_4  &&&&\\
    && i\alpha_4  & \alpha_4 &&&&\\
    &&&& \ddots &&&\\
    &&&&& \alpha_{d-1} & -i\alpha_{d-1}  &\\
    &&&&& i\alpha_{d-1}  & \alpha_{d-1} &\\
    &&&&&&& 0
    \end{pmatrix}
    $}
\end{align}
for odd dimensions if $S_{\text{R}}\ge 1/\sqrt{d-1}$. This is easy to see by noting that in both cases $\Tr R^2=\Tr I^2=d^2\sum\alpha_k^2$. This bound is shown in blue in Figs.~\ref{fig:areas}--\ref{fig:asymptotic}.

\begin{prop}
For a quantum state in odd dimension it holds that the quantities $S_{\mathrm{R}}$ and $S_{\mathrm{I}}$ satisfy
\begin{align}\label{eq:linear_bound}
    	\sqrt{d}\ S_{\mathrm{I}}\ \le\ \sqrt{d-1}\ + \ S_{\mathrm{R}}\ \ ,
\end{align}
whenever $S_{\mathrm{R}}\le1/\sqrt{d-1}$.
\label{prop:linear_bound}
\end{prop}

Before proving this bound in full generality, we first show the case $S_{\text{R}}=0$.
Every state satisfying this can be written as
\begin{align}
    \rho\ =\ \frac{1}{d}(\id+I)\ \ .
\end{align}
The operator $\id+I$ needs to be positive, that is, its eigenvalues need to be positive. Let $U$ be a unitary diagonalizing $I$. Since unitary transformations do not change the eigenvalues we can write
\begin{align}
    U(\id+I)U^\dagger\ =\ \id+UIU^\dagger\ =\ \id+I_{\text{diag}}\ \ .
\end{align}
Therefore it must hold for the eigenvalues of $I$ that $|\lambda_k(I)|\le 1$. Since $I$ is traceless it further holds that $\sum\lambda_k(I)=0$. The quantity $\Tr I^2=\sum\lambda_k(I)^2$ is convex, hence maximizing it gives a solution on the boundary.
Under the constraints $\sum\lambda_k(I)=0$ $|\lambda_k(I)|\le 1$ we find the optimal solution to be $\lambda_{2k-1}(I)=-\lambda_{2k}(I)=1$ with $\lambda_{d}(I)=0$ for odd dimension. Therefore,
\begin{align}
    S_{\text{I}}^2\ =\ \frac{1}{d}\Tr I^2\ \le\  \frac{2}{d}\lfloor\frac{d}{2}\rfloor\ \ ,
\end{align}
so that $S_{\text{I}}^2\le 1$ for even dimension and $S^2_{\text{I}}\le (d-1)/d$ otherwise.

\begin{figure*}
\begin{center}
     \includegraphics[width=0.7\columnwidth]{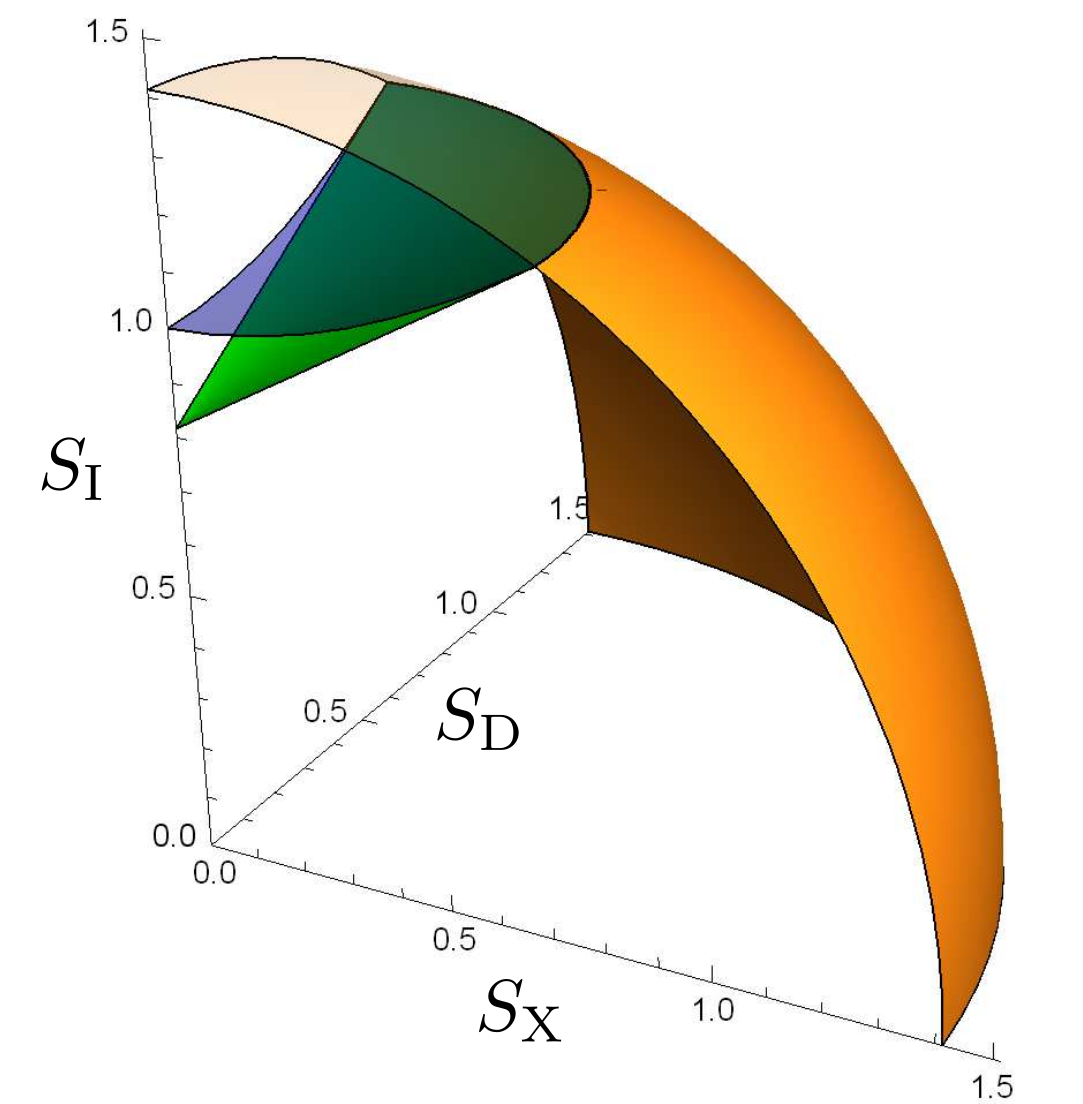}
     \hspace{-0.05\columnwidth}
     \includegraphics[width=0.7\columnwidth]{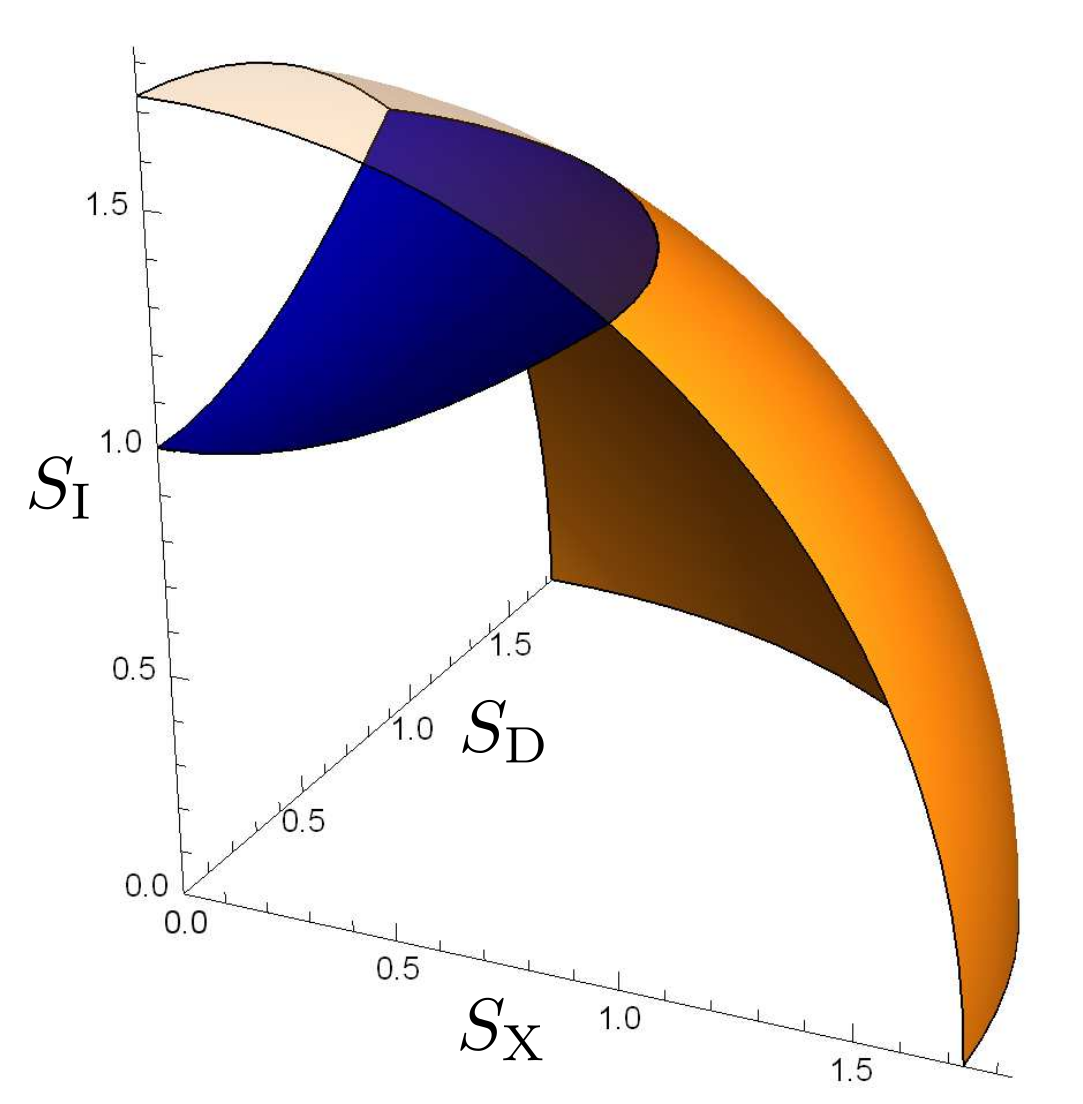}
     \hspace{-0.05\columnwidth}
     \includegraphics[width=0.7\columnwidth]{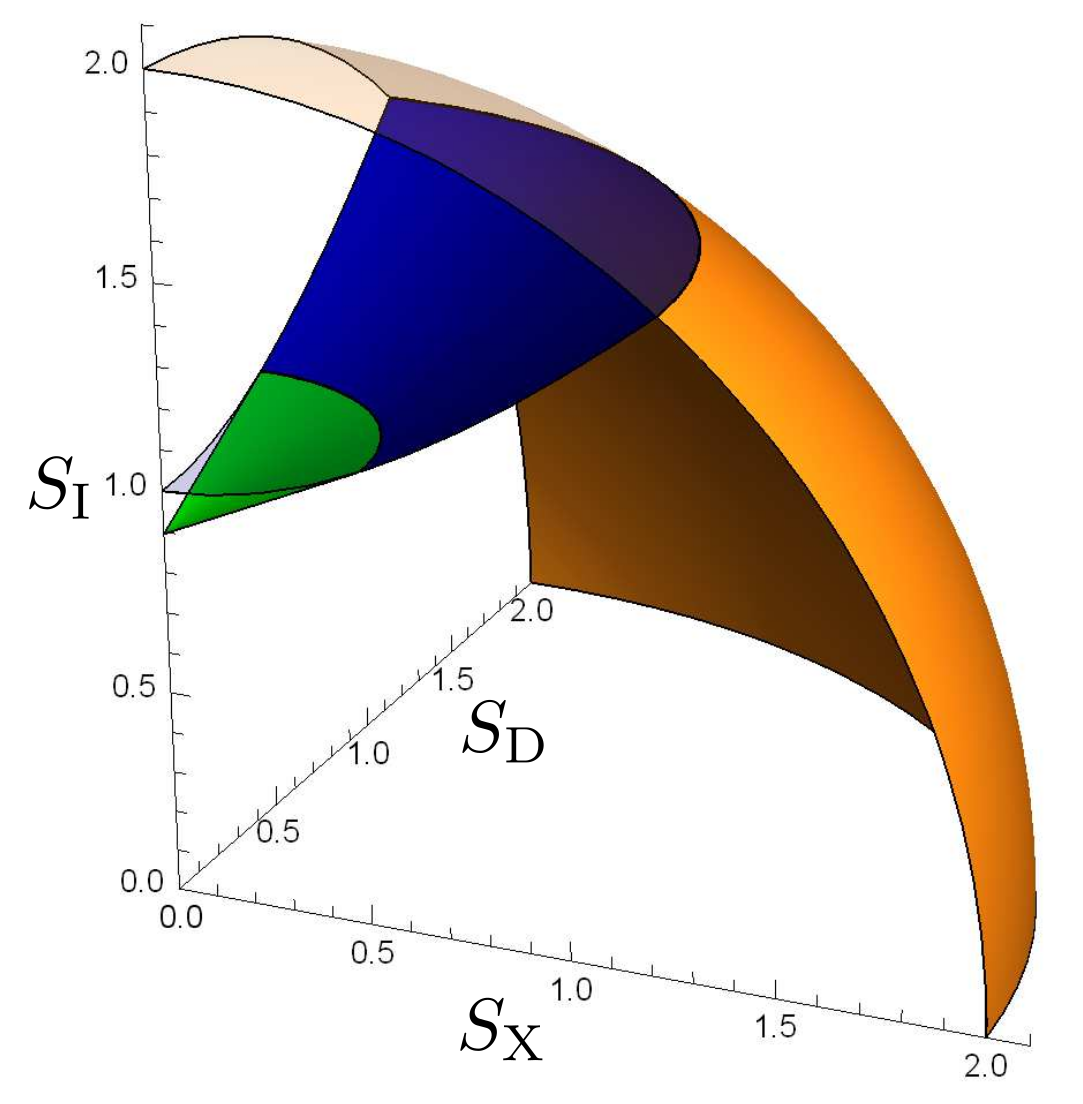}\\
     (a) \hspace{0.6\columnwidth} (b) \hspace{0.6\columnwidth} (c)\\
        \caption{The parametrization of qudit states in Eqs.~\eqref{eq:DXImat}--\eqref{eq:bloch_real_imaginary} defines a three-dimensional model of the state space. The pure states saturate the bound in Eq.~\eqref{eq:purity_bound} (shown in orange). They are located on the surface of the model and have a minimum real coordinate $S_{\text{R}}\geq \sqrt{(d-2)/2}$. This is because there is no pure-state projector corresponding to a superposition with (nearly) purely imaginary off-diagonal elements. The shape of the mixed-state surface part close to the imaginary axis varies as a function of the dimension. (a) $d=3$: The boundary is described by the linear bound Eq.~\eqref{eq:linear_bound} (shown in green), the quadratic bound Eq.~\eqref{eq:quadratic_bound} (shown in blue) is not tight for any mixed state.
        (b) $d=4$: For even dimensions the surface is given by the quadratic bound Eq.~\eqref{eq:quadratic_bound} (blue).
        (c) $d=5$: The generic case for odd dimensions where for small $S_{\text{R}}$ the surface is conic Eq.~\eqref{eq:linear_bound} (green) and for 
        $S_{\text{R}}\geq 1/\sqrt{d-1}$ saturates the quadratic bound Eq.~\eqref{eq:quadratic_bound} (blue).
        }
    \label{fig:areas3d}
\end{center}
\end{figure*}

Subsequently we prove Proposition~\ref{prop:linear_bound} for the general case.
\begin{proof}
We know that $I$ is a skew-symmetric matrix $A^{\intercal}=-A$. Since every square matrix has the same eigenvalues as its transpose, the eigenvalues of $I$ come in pairs with opposite sign, with an additional 0 for odd dimensions. Let $\lambda_k(I)$ be ordered such that $\lambda_{2k}(I)=-\lambda_{2k-1}(I)$.
Let $U$ be a unitary diagonalizing $I=UI_{\text{diag}}U^\dagger$ and define $\Tilde{R}=U^\dagger RU$.
As states remain positive under transposition, we know that both $R+I$ and $R-I$ are positive matrices.
It then also holds that $\Tilde{R}\pm I_{\text{diag}}\ge0$ and it follows that $\Tilde{R}_{kk}\ge|\lambda_{k}(I)|$.
It is known that the diagonal of a matrix $\Tilde{R}_{kk}$ is majorized by its eigenvalues $\lambda_k(\Tilde{R})=\lambda_k(R)$. Since the squared sum is a Schur convex function, it holds that $\sum\limits_{k=1}^{d}\Tilde{R}_{kk}^2\le\sum\limits_{k=1}^{d}\lambda_{k}(R)^2$
and we conclude
\begin{align*}
    \Tr I^2&=\sum\lambda_k(I)^2\le\sum\limits_{k=1}^{d-1}\Tilde{R}_{kk}^2
    \le\sum\limits_{k=1}^{d-1}\lambda_{k}(R)^2\le\Tr R^2\ .
\end{align*}
With this we recover the result of Proposition~\ref{prop:i<1+r}. The first and the second inequality are saturated if all eigenvalues satisfy $\lambda_{2k-1}(I)=-\lambda_{2k}(I)=\lambda_{2k-1}(R)=\lambda_{2k}(R)$. For the third inequality to be saturated we have to choose $\lambda_{d}(R)=0$ in odd dimensions.
But this implies $S_{\text{R}}\ge 1/\sqrt{d-1}$ so that the bound cannot be tight for $S_{\text{R}}<1/\sqrt{d-1}$.
Now choose the smallest eigenvalue of $R$ to be $\lambda_{d}(R)=t\le1$, it immediately follows that $\Tr R^2\ge d(d-2t+t^2)/(d-1)$ where the minimum is attained for $\lambda_{k}(R)=(d-t)/(d-1)$.
For a given $\Tr R^2$ it must therefore hold that 
\begin{align*}
    t\ &\ge\ 1-\frac{\sqrt{(d-1)\Tr R^2 -d^2+d}}{\sqrt{d}}\\
    &=\ 1-\sqrt{d-1}S_{\text{R}}\ .
\end{align*}
It now holds in odd dimensions for $S_{\text{R}}\le1/\sqrt{d-1}$,
\begin{align*}
    dS_{\text{I}}^2&=\Tr I^2=\sum\lambda_k(I)^2\le\sum\limits_{k=1}^{d-1}\lambda_{k}(R)^2=\Tr R^2-t^2\\
    &\le d+dS_{\text{R}}^2-(1-\sqrt{d-1}S_{\text{R}})^2=(\sqrt{d-1}+S_{\text{R}})^2,
\end{align*}
which proves the result.

\end{proof}

The inequality~\eqref{eq:linear_bound} for odd dimensions and $S_{\text{R}}\le 1/\sqrt{d-1}$ is saturated by states of the form
\begin{align}\label{states_odd_2}
    \rho\ =
    \scalebox{0.75}{$
    \begin{pmatrix} 
    \alpha & -i\alpha &&&&&&\\
    i\alpha & \alpha &&&&&&\\
    && \alpha & -i\alpha  &&&&\\
    && i\alpha  & \alpha &&&&\\
    &&&& \ddots &&&\\
    &&&&& \alpha & -i\alpha  &\\
    &&&&& i\alpha  & \alpha &\\
    &&&&&&& 1-(d-1)\alpha
    \end{pmatrix}
    $}
\end{align}
with $1/(d-1)\ge\alpha\ge1/d$. To see this it suffices to note that  $\Tr R^2=d^2(d(d-1)\alpha^2-2(d-1)\alpha+1)$
and \mbox{$\Tr I^2=d^2(d-1)\alpha^2$}.
Figures~\ref{fig:areas}--\ref{fig:asymptotic} show this bound in green. 

For even dimension $d>2$, Eqs.~\eqref{eq:purity_bound} and~\eqref{eq:quadratic_bound} describe the boundary, they intersect at $S_{\text{R}}=\sqrt{(d-2)/2}$. For odd dimension the boundary 
(for $S_{\text{R}}\le 1/\sqrt{d-1}$)  is described by Eq.~\eqref{eq:linear_bound}, it tangentially touches the bound in Eq.~\eqref{eq:quadratic_bound} at $S_{\text{R}}= 1/\sqrt{d-1}$.

Going back to our original coordinates $S_{\text{D}}$, $S_{\text{X}}$ and $S_{\text{I}}$, we immediately find the following result.

\begin{coro}\label{coro:full_bounds}
    The three quantities $S_{\mathrm{D}}$, $S_{\mathrm{X}}$ and $S_{\mathrm{I}}$ satisfy the bounds 
    \begin{subequations}
    \begin{align}
        S_{\mathrm{I}}^2&\ \le\ 1 +S_{\mathrm{D}}^2+S_{\mathrm{X}}^2\\
       \sqrt{d} S_{\mathrm{I}}&\ \le\ \sqrt{d-1} +\sqrt{S_{\mathrm{D}}^2+S_{\mathrm{X}}^2}\ \ ,
    \end{align}
    \end{subequations}
    where the last inequality holds only for $S_{\mathrm{D}}^2+S_{\mathrm{X}}^2\le1/(d-1)$.
\end{coro}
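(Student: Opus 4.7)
The plan is to obtain the corollary as an immediate rewriting of Propositions~\ref{prop:i<1+r} and~2 in the finer coordinates $S_{\text{D}}$, $S_{\text{X}}$. The single algebraic fact that does all the work is the identity
\begin{align*}
    S_{\text{R}}^2 \ =\ S_{\text{D}}^2 + S_{\text{X}}^2\ ,
\end{align*}
so the entire task reduces to establishing this identity and then substituting it into Eqs.~\eqref{eq:quadratic_bound} and~\eqref{eq:linear_bound}.

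First I would verify the identity. Since $R = \id + D + X$ and the three summands are mutually Hilbert-Schmidt orthogonal (the cross terms $\Tr D = \Tr X = 0$ vanish because $D$ and $X$ are traceless, and $\Tr(DX) = 0$ is stated right after Eq.~\eqref{eq:DXImat}), one has $\Tr R^2 = d + \Tr D^2 + \Tr X^2$. Using the convention $S_{\text{R}}^2 = \Tr R^2/d - 1$ introduced just before Proposition~\ref{prop:i<1+r}, together with the definitions of $S_{\text{D}}$ and $S_{\text{X}}$, the identity follows at once.

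Second, I would substitute. Plugging $S_{\text{R}}^2 = S_{\text{D}}^2 + S_{\text{X}}^2$ into Eq.~\eqref{eq:quadratic_bound} yields part (a) for every finite-dimensional quantum state. Plugging the same identity into Eq.~\eqref{eq:linear_bound} and noting that $S_{\text{R}} = \sqrt{S_{\text{D}}^2 + S_{\text{X}}^2}$ yields part (b); the validity window $S_{\text{R}} \le 1/\sqrt{d-1}$ transcribes verbatim to $S_{\text{D}}^2 + S_{\text{X}}^2 \le 1/(d-1)$, which is exactly the stated restriction.

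There is no genuine obstacle here: the corollary is a repackaging of the propositions, and the only conceptual content -- namely, that within the real sector the split into diagonal and real off-diagonal weight is immaterial for these bounds -- is automatic because both propositions depend on $D$ and $X$ only through the combined symmetric operator $R$.
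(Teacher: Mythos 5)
Your proposal is correct and follows essentially the same route as the paper, which likewise obtains the corollary by substituting $S_{\text{R}}^2 = S_{\text{D}}^2 + S_{\text{X}}^2$ into Eqs.~\eqref{eq:quadratic_bound} and~\eqref{eq:linear_bound}. The only difference is cosmetic: the paper takes that identity as the definition of $S_{\text{R}}$ (introduced just before Proposition~\ref{prop:i<1+r}), whereas you re-derive it from the orthogonality of $\id$, $D$, and $X$ -- a harmless consistency check.
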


\begin{proof}
    The bounds follow trivially from Eq.~\eqref{eq:quadratic_bound} and~\eqref{eq:linear_bound} by replacing $S_{\text{R}}^2=S_{\text{D}}^2+S_{\text{X}}^2$.
\end{proof}

\begin{prop}
    The bounds in Corollary~\ref{coro:full_bounds} define a tight upper bound. Together with Eq.~\eqref{eq:purity_bound} they are optimal, in the sense that there exist no further constraints on the quantities $S_{\mathrm{D}}$, $S_{\mathrm{X}}$ and $S_{\mathrm{I}}$ that do not trivially follow from them.
\end{prop}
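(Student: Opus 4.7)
The proposition contains two claims: \emph{tightness}, that each of the three bounds is saturated by some state, and \emph{optimality}, that every triple $(S_{\text{D}},S_{\text{X}},S_{\text{I}})$ in the region cut out by the three bounds is realized by a quantum state. Tightness is already in hand: pure states saturate~\eqref{eq:purity_bound}, the states in Eqs.~\eqref{states_even}--\eqref{states_odd_1} saturate~\eqref{eq:quadratic_bound}, and the states in Eq.~\eqref{states_odd_2} saturate~\eqref{eq:linear_bound}. The plan therefore focuses on optimality.

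The first step is a rotational symmetry in the $(S_{\text{D}},S_{\text{X}})$ plane. For any real orthogonal $O\in O(d)$, the conjugation $\rho\mapsto O\rho O^T$ preserves positivity, trace and, by cyclicity, the quantities $S_{\text{I}}$ and $S_{\text{R}}=\sqrt{S_{\text{D}}^2+S_{\text{X}}^2}$. Because $O$ is real, $O(D+X)O^T$ is again real symmetric and $OIO^T$ is again purely imaginary antisymmetric with vanishing diagonal, so the decomposition structure is respected and only the split of $S_{\text{R}}^2$ between $S_{\text{D}}^2$ and $S_{\text{X}}^2$ can change. The Schur--Horn theorem asserts that the diagonal of $O(D+X)O^T$ can be made equal to any vector majorized by the spectrum of $D+X$; the two extremes --- the spectrum itself (yielding $S_{\text{X}}=0$) and the zero vector, which is majorized by every traceless real vector (yielding $S_{\text{D}}=0$) --- are both attained, and continuity of the map $O\mapsto(S_{\text{D}},S_{\text{X}})$ together with connectedness of $SO(d)$ fills in every intermediate split.

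It thus suffices to realize every $(S_{\text{R}},S_{\text{I}})$ in the two-dimensional allowed region. I would cover the outer boundary with the explicit families already at hand: pure states sweep the purity arc $S_{\text{R}}\ge\sqrt{(d-2)/2}$; the states of Eqs.~\eqref{states_even}/\eqref{states_odd_1} sweep the quadratic arc as the $\alpha_i$ vary from the uniform choice (the smallest $S_{\text{R}}$ on the arc) to single-block concentration (the crossover with the purity arc); and, in odd dimension, the states of Eq.~\eqref{states_odd_2} sweep the linear arc as $\alpha$ ranges over $[1/d,1/(d-1)]$. The interior is then filled by mixing with the maximally mixed state: since $(1-\lambda)\rho+\lambda\,\id/d$ scales $(S_{\text{R}},S_{\text{I}})$ uniformly by $(1-\lambda)$ and the allowed region is star-shaped about the origin, boundary coverage automatically yields interior coverage.

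The step I expect to require the most care is the Schur--Horn argument for rotational symmetry: one needs not only the compatibility of both extremes with majorization but also a genuine connected path in $SO(d)$ realizing all intermediate splits, which I would obtain by combining continuity of the coordinate maps, connectedness of $SO(d)$, and the intermediate value theorem. The rest amounts to routine evaluation of $(S_{\text{R}},S_{\text{I}})$ as functions of the parameters in the explicit families and checking that the arcs connect continuously at the crossover points $S_{\text{R}}=1/\sqrt{d-1}$ (linear/quadratic) and $S_{\text{R}}=\sqrt{(d-2)/2}$ (quadratic/purity).
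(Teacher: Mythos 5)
Your proposal is correct and follows essentially the same route as the paper: explicit families with $S_{\text{X}}=0$ saturating each bound, followed by real orthogonal conjugations (which preserve $S_{\text{R}}$ and $S_{\text{I}}$) to continuously redistribute weight between $S_{\text{D}}$ and $S_{\text{X}}$ — the paper implements this via an explicit sequence of Givens rotations equalizing the diagonal, where you invoke Schur--Horn plus connectedness of $SO(d)$ and the intermediate value theorem. Your additional step of filling the interior by mixing with $\id/d$ and star-shapedness makes explicit a point the paper's proof leaves implicit.
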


\begin{proof}
The states in Eq.~\eqref{states_even}, \eqref{states_odd_1} and \eqref{states_odd_2} have no real off-diagonal contribution and thus saturate the bounds in Corollary~\ref{coro:full_bounds} for $S_{\text{X}}=0$, together with the pure states $\beta\ket{0}+i\sqrt{1-\beta^2}\ket{1}$, where $0\leq\beta\leq 1$.

We will now define a continuous orthogonal transformation that shifts weight from the diagonal component to the real off-diagonal component as long as $S_{\text{D}}>0$. Note that any  orthogonal (i.e., real) transformation leaves the coordinates $S_{\text{R}}$ and  $S_{\text{I}}$ unchanged. 
Consider an orthogonal transformation $O_{kl}(\theta)= \cos{\theta}(\ketbra{k}{k}+\ketbra{l}{l})+\sin{\theta}(\ketbra{k}{l}-\ketbra{l}{k})$ acting on the subspace spanned by the states $\{\ket{k},\ket{l}\}$. 
Start acting with this transformation on the first and the last entry until either one of the new diagonal entries becomes $1/d$.
If it is the last entry we perform a second orthogonal transformation from $\ket{0}$ to $\ket{d-2}$ until one of them reaches $1/d$. If it is the first entry we apply the transformation from $\ket{1}$ to $\ket{d-1}$. We repeat this process until all diagonal entries are equal to $1/d$.
This procedure continuously shifts weight from the diagonal coordinate $S_{\text{D}}$ onto the real off-diagonal coordinate $S_{\text{X}}$ until the diagonal coordinate vanishes. The orbits of the states in Eqs.~\eqref{states_even}, \eqref{states_odd_1} and \eqref{states_odd_2} form the boundary.
This concludes the proof.

\end{proof}

All inequalities on the quantum state space follow from the positivity (and normalization) of quantum states. However, to check for positivity becomes increasingly hard with increasing dimension. Moreover, those conditions offer little insight into the structure of the quantum state space. The inequalities in Corollary~\ref{coro:full_bounds} of the newly defined coordinates give us necessary (but not sufficient) conditions for the positivity of a quantum state. They introduce a novel bound on the magnitude of the imaginary part of a quantum state that is easy to verify and at the same time provides new insight on the structure of the quantum state space.

\begin{figure}
    \includegraphics[width=0.9\columnwidth]{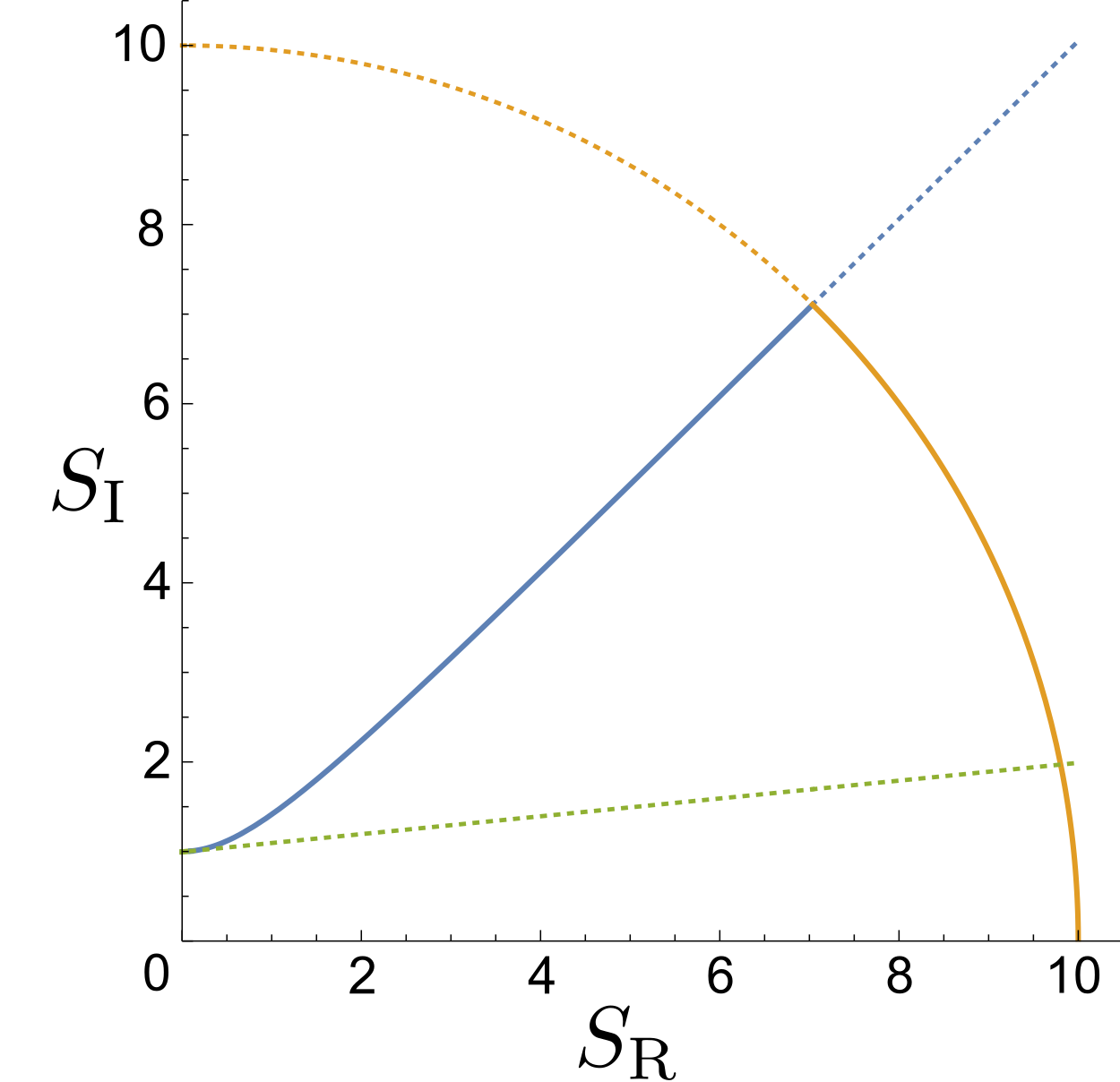}
    \caption{The figure shows the bound of Eq.~\eqref{eq:purity_bound} in orange, the bound of Eq.~\eqref{eq:quadratic_bound} in blue and the bound of Eq.~\eqref{eq:linear_bound} in green for dimension $d=101$.
    }
    \label{fig:asymptotic}
\end{figure}

\section{Bloch ball for a qudit}

We now discuss  the geometric implications
of our investigation. In recent years considerable effort was devoted to the question of describing the quantum state space of dimension $d$~\cite{Kimura_2003,Bengtsson_2012} and constructing models for the quantum state space , e.g., Refs.~\cite{Eltschka_2021,DCM29429,Levitt_2021,Rau_2021,jarov2023,Sharma_2024,Morelli_2024,shravan2023geometry}. A model of the quantum state space represents the set of all quantum states and is therefore distinct from visualizations of individual states, like for example presented in Ref.~\cite{Kurzynski_2016}. That is, one asks for the shape of this $(d^2-1)$-dimensional object if it is represented in a lower dimension, thereby maintaining as many geometric properties of the original state space as possible. Because of the reduction in the number of parameters for dimensions $d\geq 3$ it is clear that any model, while correctly displaying some of the properties of the state space, will suffer from a loss of information and inadequately represent other features. Consequently, any model is the result of a judicious choice of the coordinates, depending on the set of properties that are to be maintained. The corresponding objects may have rather different shapes.

Our work provides a striking illustration for these considerations. As is evident from Fig.~\ref{fig:areas3d}, our parametrization of the qudit state Eqs.~\eqref{eq:DXImat}, \eqref{eq:purity} in imaginary, real diagonal and off-diagonal coordinates leads to a three-dimensional model for the qudit state space for all $d\geq 2$.
By construction, this model preserves the norm of the Bloch vector.
Other models that fall into this category have been introduced in Refs.~\cite{WyderkaGuhne20,Eltschka_2021,Sharma_2024,Morelli_2024}.
In all these models the vector space encompassing the state space is divided into three orthogonal subspaces. The norms of the Bloch vector in each subspace are then used as coordinates.
The difference between the models is determined by the choice of orthogonal subspaces. If we choose a matrix basis that splits along the chosen subspaces, e.g., the Gell-Mann matrices, then we can regard this as splitting the Bloch vector into three parts with the corresponding norms of each part as coordinates.
The model of a qutrit in Ref.~\cite{Eltschka_2021} emphasizes the simplex of diagonal states by keeping a faithful representation of this two-dimensional subspace.
The two-qubit model in Ref.~\cite{Morelli_2024} and the three-qubit model in Ref.~\cite{WyderkaGuhne20} focus on distinguishing local and global properties, the coordinates capture one-body marginals and two- or three-body correlations.
The present model focuses on the bounds for the imaginary coordinate with respect to the real (diagonal and off-diagonal) counterparts.

In contrast to the two models in Refs.~\cite{Eltschka_2021,Morelli_2024}, this model is not convex for $d\geq 3$. This is not surprising, the norm is a nonlinear function and there is generally no reason to expect that the convex state space is mapped to a convex region. A similar behavior is observed in  Ref.~\cite{WyderkaGuhne20} (for the non-squared coordinates).

Another salient feature of the present model is its rotation invariance about the imaginary axis, cf.~Fig.~\ref{fig:areas3d}. It is a direct consequence of the unitary invariance of the purity, Eq.~\eqref{eq:purity}: Any transformation that leaves $S_{\text{I}}$ unchanged, necessarily preserves also the radius $S_{\text{R}}$ in the real plane.

Intriguingly, the bounds in Eqs.~\eqref{eq:quadratic_bound}, \eqref{eq:linear_bound} reveal that the geometry of the state space boundary is different in even vs.\ odd dimensions: While for odd dimensions there is always a flat surface part (of mixed states) close to the most distant point on the imaginary axis, the surface is curved in the corresponding region for even dimensions, cf.~Fig.~\ref{fig:areas3d} (b).
We note that the concave part of the model in $d=3$ is completely described by the linear bound Eq.~\eqref{eq:linear_bound}. 
For this special case convexity is restored if just one real coordinate $S_{\mathrm{R}}$ is used (instead of $S_{\mathrm{X}}$ and $S_{\mathrm{D}}$), as can be observed in Fig.~\ref{fig:areas} (a).
That $d=3$ plays a special role has no deep reason, but is a simple consequence of the low dimension. There is only one unique state of the form Eq.~\eqref{states_odd_1}, which is pure, and there is no quadratic boundary. This is in analogy with the fact that for $d=2$ the boundary is completely described by the purity constraint Eq.~\eqref{eq:purity_bound}, as there is only one unique (pure) state of the form Eq.~\eqref{states_even}.
Hence the first dimension for which the generic odd-$d$ behavior is observed is $d=5$: In the vicinity of $S_{\text{I}}=\sqrt{(d-1)/d}$ the linear bound describes the surface before the quadratic bound, Eq.~\eqref{eq:quadratic_bound}, takes effect for $S_{\text{R}}=1/\sqrt{d-1}$.  The linear surface part remains present for all odd dimensions but becomes negligibly small for large $d$, see Fig.~\ref{fig:asymptotic}.
At the same time the restriction imposed by Eq.~\eqref{eq:quadratic_bound} becomes more stringent. The respective part in $\left(S_{\text{R}},S_{\text{I}}\right)$ coordinates that remains compatible with a quantum state decreases, until asymptotically only half of the original circular area defined by the purity bound in Eq.~\eqref{eq:purity} remains, see Fig.~\ref{fig:asymptotic}.

\section{Relevant Connections}

Recently, the resource theory of imaginarity has received considerable attention~\cite{Hickey_2018, Wu_2021,Wu_2021_2} 
and one might ask how our results relate to it.
A measure of imaginarity is a quantity that is non-increasing under real quantum channels, i.e., completely positive and trace-preserving (CPTP) maps that allow a description with purely real Kraus operators.
The robustness of imaginarity is defined as $\mathcal{I}(\rho)=\|\rho-\rho^T\|_1/2=\|I/d\|_1$ and the geometric measure of imaginarity is defined as $\mathcal{I}_G(\psi)=1-\sup_{\phi\in\mathcal{R}}|\langle\phi|\psi\rangle|^2$, with the convex roof extension for mixed states~\cite{Wu_2021_2}.
While these quantities are imaginarity monotones, our imaginary coordinate $S_{\text{I}}$ is not an imaginarity measure~\cite{Chen_2023}. This stems from the fact that the Hilbert-Schmidt norm is generally not contractive under CPTP maps for dimension larger than two~\cite{Ozawa_2000}, in contrast to the trace norm~\cite{Ruskai1994BEYONDSS}.
Our goal is not the study of imaginarity as a resource, rather we characterize the constraints on coordinates describing the quantum state space.
Nonetheless, our analysis establishes a relation between the real weight and the robustness of imaginarity $\mathcal{I}$ for odd dimensions. For $S_{\text{R}}\geq1/\sqrt{d-1}$ there exist states that achieve $\mathcal{I}=1$ (cf.~the states saturating the bound in Proposition~\ref{prop:i<1+r}), while for
$0\le S_R<1/\sqrt{d-1}$, it holds that $\mathcal{I}\le1-1/d+\sqrt{d-1}/d\ S_{\text{R}}<1$, saturated by the states in Proposition~\ref{prop:linear_bound}.
On the other hand, for even dimensions, no restriction exists concerning real weight and imaginarity, showing again remarkable differences between even and odd dimensions.
As a consequence, in even dimension the purity of states with maximal imaginarity $\mathcal{I}=1$ is bounded by $\Tr{\rho^2}\ge2/d$, whereas in odd dimensions by $\Tr{\rho^2}\ge 2/(d-1)$.

Apart from the general objective of gaining deeper insight into the geometric structure of the quantum-mechanical state space, there are also clear physics-driven motivations for understanding this structure. A natural application of such models is the visualization of the time evolution of quantum states, in particular for dissipative systems -- or, in other words, the visualization of the action of quantum channels on higher-dimensional systems, in analogy with qubit dynamics in the standard Bloch ball (see, for example, Refs.~\cite{Feynman1957,NielsenChuang10,Bengtsson_2006, Levitt_2021,Eltschka_2021,Cafaro_2025}). Another interesting application is discussed in Ref.~\cite{Levitt_2021}, where the authors investigate regions of the state space of nuclear spins in which hyperpolarization becomes possible in nuclear magnetic resonance experiments at a given temperature, based on an entropic criterion.
These examples illustrate that there are strong and diverse motivations for studying such geometric structures.

\section{Conclusion}

We generalize the Bloch coordinates of a qubit describing the diagonal, real off-diagonal and imaginary part of a quantum state to arbitrary finite dimensions. We completely characterize the set of coordinates compatible with a quantum state. While for the simplest quantum system of a qubit these coordinates are interchangeable, this property is lost in higher dimensions. The imaginary coordinate plays a distinctive role and is bounded with respect to the other coordinates.
With these findings we obtain a three-dimensional model of the generally higher-dimensional quantum state space that complements the already established models.
Surprisingly, our bound exhibits a qualitative difference between even and odd dimensions, where the general behavior for odd dimensions is first observed in dimension 5. 
We would like to mention that the highly topical open questions about the existence of mutually unbiased bases~\cite{DURT_2010} and symmetric informationally complete measurements~\cite{ScottGrassl_2010} are in essence the question whether the shape of the quantum state space (in the realm of pure states) qualitatively changes with the dimension.
Our results provide an indication that for mixed states such differences indeed do exist.

\section{Acknowledgements}

The authors thank A. Izquierdo for discussions.
S.M. is a recipient of an APART-MINT Fellowship of the Austrian Academy of Sciences at the Atominstitut of the Technische Universit\"at Wien and
was supported by
BCAM-IKUR, funded by the Basque Government by the IKUR Strategy and by the European Union NextGenerationEU/PRTR,
the Basque Government through the BERC 2022-2025 program
and by the Ministry of Science and Innovation: BCAM Severo Ochoa accreditation CEX2021-001142-S / MICIN / AEI / 10.13039/501100011033.
S.L. was funded in whole or in part by the National Science Centre, Poland 2024/54/E/ST2/00451, by the Polish National Agency for Academic Exchange under the Strategic Partnership Programme grant BNI/PST/2023/1/00013/U/00001, from the Spanish MICINN with funding from European Union NextGenerationEU (PRTRC17.I1) and the Generalitat de Catalunya, and from the Spanish Agencia Estatal de Investigaci\'on and the MICINN (grant PID2022-141283NB-I00).
J.S. was supported by
Grant PID2021-126273NB-I00 funded by MCIN/AEI/10.13039/501100011033 and by "ERDF A way of making Europe" as well as by
the Basque Government through Grant No.\ IT1470-22.
For the purpose of Open Access, the authors have applied a CC-BY public copyright licence to any Author Accepted Manuscript (AAM) version arising from this submission.
The authors acknowledge TU Wien Bibliothek for financial support through its Open Access Funding Programme.

\bibliographystyle{bibstyle2}
\bibliography{bibfile}

\end{document}